\pdfminorversion=4

\documentclass[letterpaper, 10 pt, conference]{ieeeconf}
\IEEEoverridecommandlockouts
\overrideIEEEmargins

\usepackage{cite}


\usepackage{amsmath,amssymb,amsfonts,amsthm,mathptmx}
\usepackage{algorithmic}
\usepackage{graphicx}
\usepackage[font=small]{caption}

\makeatletter               
\let\NAT@parse\undefined
\makeatother
\usepackage{hyperref}
\hypersetup{colorlinks,breaklinks,
linkcolor=[rgb]{0.5,0.,0.},
citecolor=[rgb]{0.000,0.427,0.173},
urlcolor=[rgb]{0.031,0.318,0.612}}

\newtheorem{thm}{Theorem}[section]
\newtheorem{prop}[thm]{Proposition}

\usepackage{textcomp}
\usepackage{xcolor}
\def\BibTeX{{\rm B\kern-.05em{\sc i\kern-.025em b}\kern-.08em
    T\kern-.1667em\lower.7ex\hbox{E}\kern-.125emX}}

\title{\LARGE \bf Free Energy Principle for the Noise Smoothness Estimation of \\  Linear Systems with Colored Noise  \thanks{Both authors are with the Department of Cognitive Robotics at TU Delft, The Netherlands. Corresponding author: ajitham1994@gmail.com} }

\author{Ajith Anil Meera and Martijn Wisse}

\begin{document}

\maketitle
\thispagestyle{empty}
\pagestyle{empty}

\begin{abstract}
The free energy principle (FEP) from neuroscience provides a framework called active inference for the joint estimation and control of state space systems, subjected to colored noise. However, the active inference community has been challenged with the critical task of manually tuning the noise smoothness parameter. To solve this problem, we introduce a novel online noise smoothness estimator based on the idea of free energy principle. We mathematically show that our estimator can converge to the free energy optimum during smoothness estimation. Using this formulation, we introduce a joint state and noise smoothness observer design called DEMs. Through rigorous simulations, we show that DEMs outperforms state-of-the-art state observers with least state estimation error. Finally, we provide a proof of concept for DEMs by applying it on a real life robotics problem - state estimation of a quadrotor hovering in wind, demonstrating its practical use.

\end{abstract}


\section{INTRODUCTION}
The rising demand for autonomous drone delivery systems has increased the need for accurate state observers that are robust against uncertain events like strong wind currents. These unmodelled wind currents induce colored noise to the system, hindering the safe operation of drones. We take a step in this direction by using the ideas from computational neuroscience to introduce a novel state and noise smoothness observer design for linear systems with colored noise.


The classical linear estimators like Kalman Filter (KF) assumes the noises to be white. This assumption is often violated in practice, resulting in a sub-optimal estimation \cite{crassidis2004optimal}. Many adaptations on KF have been introduced to overcome this challenge, including Second Moment Information Kalman filter (SMIKF) \cite{zhouSMIKF}, State Augmentation (SA) \cite{brysonSA}, Measurement Differencing (MD) \cite{brysonMD} etc.  Dynamic Expectation Maximization (DEM) \cite{friston2008variational}, based on Free Energy Principle (FEP) \cite{friston2010free} from neuroscience has recently been used to design state and input observers \cite{meera2020free} that has shown to outperform the classical methods both in simulation and in real robot experiments \cite{bos2021free}. However, DEM requires the prior knowledge of the noise smoothness parameter for the accurate state estimation. To solve this problem, we introduce a novel online noise smoothness estimator based on FEP, for linear systems with colored noise. The core contributions of the paper include: 
\begin{enumerate}
    \item introduction of an online smoothness estimator for the state estimation of linear systems under colored noise,
    \item extensive evaluation of the estimator in simulation and its validation on a real robot (quadrotor flight) data.
\end{enumerate}

\section{RELATED WORK}
This section highlights the interdisciplinary nature of FEP with related works in neuroscience and robotics literature.
\subsection{Neuroscience}
FEP posits that all biological systems resist their natural tendency to disorder by minimizing their free energy \cite{friston2010free}, where free energy is an information theoretic measure that bounds sensory surprisal. FEP emerges as a unified theory of the brain by providing a mathematical formalism for brain functions \cite{friston2009free}, unifying action and perception \cite{friston2010action}, explaining Freudian ideas \cite{carhart2010default}, and connecting memory and attention \cite{friston2009free}.  The work closest to our proposed idea is the Generalized filtering \cite{friston2010generalised} that uses FEP for noise smoothness estimation during the inversion of dynamic models of the brain (fMRI data) \cite{li2011generalised}. We extend this idea into robotics to design an online state and noise smoothness observer for applications like quadrotor flights with wind as colored noise.

\subsection{Robotics and control}
In control systems literature, numerous approaches are used to deal with colored noise during state estimation \cite{crassidis2004optimal}. SA models the colored process noise as an auto-regressive (AR) noise and augments the state space equation to transform it into an equivalent system influenced by white noise \cite{brysonSA}. SMIKF \cite{zhouSMIKF} extends KF for coloured noise by incorporating the temporal correlations of the AR noise into the prior covariance calculation of KF. MD \cite{brysonMD} approach deals with handling colored measurement noise. However, the white noise assumption is prevalent in robotics for the state estimation of a quadrotor \cite{xiong2015optimal}, which might not be effective in outdoor windy conditions \cite{bos2021free}. Our work fills this research gap by providing an online noise smoothness estimator.

The brain inspired nature of FEP has already inspired the development of intelligent agents \cite{lanillos2021active} -- body perception of humanoid robots \cite{oliver2019active}, estimation and control of manipulator robot \cite{baioumy2021active}, system identification of a quadrotor \cite{meera2021brain}, SLAM \cite{ccatal2021robot}, PID controller \cite{baltieri2019pid}, KF \cite{bos2021free,baltieri2021kalman} etc. These active inference applications can employ our noise estimator for better estimation and control of robots during colored noise.

\section{PROBLEM STATEMENT}
Consider the linear plant dynamics given in Equation \ref{eqn:general_LTI} where $A$, $B$ and $C$ are constant system matrices, $\textbf{x}\in \mathbb{R}^n$ is the hidden state, $\textbf{v}\in \mathbb{R}^r$ is the input and $\textbf{y}\in \mathbb{R}^m$ is the output.
\begin{equation} 
\label{eqn:general_LTI}
    \begin{split}
       \Dot{\textbf{x}} = A\textbf{x}+B\textbf{v} + \textbf{w},
    \end{split}
    \quad \quad
    \begin{split}
        \textbf{y} = C\textbf{x} +\textbf{z}.
    \end{split}{}
\end{equation}{}Here $\textbf{w}\in \mathbb{R}^n$ and $\textbf{z}\in \mathbb{R}^m$ represent the process and measurement noise with noise precision (inverse covariance) $\Pi^w$ and $\Pi^z$  respectively. Variables of the plant are denoted in boldface, while its estimates are denoted in non-boldface. The noises in this paper are generated through the convolution of white noise with a Gaussian filter of kernel width $s$. 

The problem considered in this paper is the state ($x$) and noise smoothness ($s$) observer design (DEMs) for the linear system given in Equation \ref{eqn:general_LTI}, subjected to colored noise. We show that our observer outperforms state-of-the-art state observers, both in simulation (Section \ref{sec:benchmarking}) and on real robot data (Section \ref{sec:quadrotor}).

\section{NOISE COLOR MODELLING}
The two key concepts behind the success of DEM in handling the colored noise are i) the use of generalized coordinates and ii) the noise precision modelling. This section aims to elaborate on these theoretical concepts.  

\subsection{Generalized coordinates}
Generalized coordinates is a vector representation of the trajectory of a time varying quantity ($x, v, y$) using a collection of its higher order derivatives. For example, the state vector in generalized coordinates is written using a tilde operator as $\tilde{x}  = [x \ x' \ x'' \ ....]^T$, where the dash operator represents the derivatives. The key advantage is the capability to track the trajectory of states, unlike the classical estimators that track only the point estimates. This provides additional data for DEM during estimation, resulting in its superior performance during state estimation under colored noise. Since the noises are colored, the higher derivatives of the system model can be written as \cite{friston2008variational}: 
\begin{equation} 
    \begin{split}
        &x'=Ax+Bv+w \\
        &x''=Ax'+Bv'+w'\\ &...
    \end{split}
    \quad \quad
     \begin{split}
        &y=Cx+z \\
        & y' = Cx'+z'\\ &...
    \end{split}   
\end{equation}{} 
which can be compactly written as: \begin{equation}
\label{eqn:generative_process}
    \begin{split}
        &\dot{\tilde{{x}}}  = D^x\tilde{{x}}= \tilde{A}\tilde{{x}}+\tilde{B}\tilde{{v}} +\tilde{{w}} 
    \end{split}{}
    \quad \quad
    \begin{split}
        &\tilde{{y}} = \tilde{C}\tilde{{x}}+\tilde{{z}}
    \end{split}{}
\end{equation}{} where $D^x = \Bigg[ \begin{smallmatrix}{}
0 & 1 & & &\\
 & 0 & 1 & & \\
 & & .& . &  \\
 & & & 0& 1 \\
 & & & & 0
\end{smallmatrix} \Bigg]_{(p+1)\times (p+1)} \otimes I_{n\times n}.$\\ 
Here, $D^x$ represents the shift matrix, which performs the derivative operation on the generalized state vector. $p$ and $d$ represent the embedding order for the hidden states and the inputs respectively, indicating the number of derivatives used. The generalized system matrices are given by $ \tilde A = I_{p+1} \otimes A,\hspace{10pt} \tilde B = I_{p+1} \otimes B,\hspace{10pt}\tilde C = I_{p+1} \otimes C$, where $I$ denotes the identity matrix and $\otimes$ the Kronecker tensor product. 

\subsection{Noise precision modelling}
The second key concept is the modelling of generalized noise precision (inverse covariance) matrix $\tilde{\Pi}$. Since the noises are assumed to be Gaussian convoluted white noise, the covariance matrix embedding the relation between noise derivatives take a specific structure \cite{friston2008variational}. The smoothness matrix defining this relation for $p=6$ is calculated as \cite{meera2020free}:
\begin{equation} \label{eqn:S_matrix}
    S = \begin{bmatrix}
    \frac{35}{16} & 0 & \frac{35}{8}s^2 & 0 & \frac{7}{4} s^4 & 0 & \frac{1}{6} s^6 \\
    0 & \frac{35}{4} s^2 & 0 & 7s^4 & 0 & s^6 & 0 \\
    \frac{35}{8}s^2 & 0 & \frac{77}{4}s^4 & 0 & \frac{19}{2} s^6 & 0 & s^8 \\
    0 & 7s^4 & 0 & 8s^6 & 0 & \frac{4}{3}s^8 & 0 \\
    \frac{7}{4} s^4 & 0 & \frac{19}{2} s^6 & 0 & \frac{17}{3}s^8 & 0 & \frac{2}{3}s^{10} \\
    0 & s^6 & 0 & \frac{4}{3} s^8 & 0 & \frac{4}{15}s^{10} & 0 \\
    \frac{1}{6} s^6 & 0 & s^8 & 0 & \frac{2}{3}s^{10} & 0 & \frac{4}{45}s^{12} \\
    \end{bmatrix},
\end{equation}
where $s$ is the kernel width of the Gaussian filter. Since $s<1$ second for practical cases (sensors have high sampling rate), the first elements in $S$ matrix are higher than the last ones, implying a higher correlation between the first noise derivatives (more smooth) than the last derivatives (less smooth). The combined (generalized) noise precision matrix can be written using the $S$ matrix as:
\begin{equation} \label{eqn:gen_precision}
    \tilde{\Pi} = \begin{bmatrix}
    S \otimes \Pi^z & 0 \\ 0 & S \otimes \Pi^w
    \end{bmatrix}.
\end{equation}
With the key concepts in place, the next section derives the free energy formulations that are necessary for the observer design in Section \ref{sec:observer_design}. 

\section{FREE ENERGY OPTIMIZATION}
FEP uses Bayesian Inference to estimate the posterior probability $p(\vartheta/\textbf{y}) = {p(\vartheta,\textbf{y}}/{\int p(\vartheta,\textbf{y})d\vartheta}$, where $\vartheta$ is the component to be estimated ($\vartheta = \{\tilde{x},s\}$), and $\textbf{y}$ is the measurement \cite{buckley2017free}. The presence of an intractable integral motivates the use of a variational density $q(\vartheta)$, called the recognition density that approximates the posterior as $q(\vartheta)\approx p(\vartheta/\textbf{y})$. This approximation is achieved by minimizing the Kullback-Leibler (KL) divergence of the distributions given by $KL(q(\vartheta)||p(\vartheta/\textbf{y})) = \langle \ln q(\vartheta)\rangle_{q(\vartheta)} - \langle \ln{p(\vartheta/\textbf{y})}\rangle_{q(\vartheta)}$, where $\langle.\rangle_{q(\vartheta)}$ represents the expectation over $q(\vartheta)$. Upon simplification using $p(\vartheta/\textbf{y}) = {p(\vartheta,\textbf{y})}/{p(\textbf{y})}$, it reduces to \cite{friston2010free}: 
\begin{equation}
    \label{eqs:logevidence}
    \ln p(y) = F + KL(q(\vartheta)||p(\vartheta|\textbf{y})),
\end{equation}
where $ F =  \langle \ln{p(\vartheta,\textbf{y})}\rangle_{q(\vartheta)}  -\langle \ln q(\vartheta)\rangle_{q(\vartheta)}$ is the free energy. Since $\ln p(\textbf{y})$ is independent of $\vartheta$, minimization of the KL divergence involves the maximization of free energy. This is the fundamental idea behind using free energy as the proxy for brain's inference through the minimization of its sensory surprisal \cite{friston2010free}. 

We use this idea from free energy principle for the joint observer design for $\tilde{x}$ and $s$ through two fundamental assumptions about $q(\vartheta) = q(\tilde{x},s)$: i) Mean field assumption \cite{friston2008variational} that facilitates a conditional independence between the subdensities, $q(\vartheta) = q(\tilde{x})q(s)$,  and ii) Laplace assumption \cite{friston2007variational} that facilitates the use of Gaussian distributions with mean $\mu$ and variance $\Sigma$ over these subdensities, $q(\tilde{x})= \mathcal{N}(\tilde{x}:\mu^{\tilde{x}},\,\Sigma^{\tilde{x}})$ and $q(s)= \mathcal{N}(s:\mu^{s},\,\Sigma^{s})$. We refer to \cite{anilmeera2021DEM_LTI} for an elaborate read on similar simplifications. Under these assumptions, $F$ reduces to the sum of precision weighted prediction errors and the information entropy as:
\begin{equation} \label{eqn:full_F_exp}
    F = -\frac{1}{2}\tilde{\epsilon}^T\tilde{\Pi}\tilde{\epsilon} + \frac{1}{2} \ln |\tilde{\Pi}| - \frac{1}{2} \epsilon^{sT} \Pi^s \epsilon^s +  \frac{1}{2} \ln |\Pi^s| ,
\end{equation}
where $\tilde{\epsilon}$ is the combined prediction error for outputs and states, and $\epsilon^s$ is the prediction error for $s$, given by:
\begin{equation} 
 \tilde{\epsilon} = \begin{bmatrix}
    \tilde{\textbf{y}} - \tilde{C} \tilde{x} \\
    D\tilde{x} - \tilde{A} \tilde{x} - \tilde{B} \tilde{v}
\end{bmatrix}, \text{ and } \epsilon^s = s-\eta^s   .
\end{equation}
Here $\eta^s$ and $\Pi^s$ are the prior smoothness and its prior precision (confidence). For this work, we start the estimation from a low prior $\eta^s \approx 0$ with a very low confidence $(\Pi^s = 1)$. Therefore, Equation \ref{eqn:full_F_exp} reduces to:
\begin{equation} \label{eqn:free_energy}
    F = -\frac{1}{2}\tilde{\epsilon}^T\tilde{\Pi}\tilde{\epsilon} + \frac{1}{2} \ln |\tilde{\Pi}| - \frac{1}{2}s^2.
\end{equation}
The last term in Equation \ref{eqn:free_energy} is the novel term that we have introduced for optimizing smoothness, and doesn't appear in FEP literature. Using this, we propose an online noise smoothness estimation algorithm which estimates $s$ through the gradient ascend (maximization) of $F$, where $\frac{\partial F}{\partial s} |_{s=s_{o}} = 0$ and $\frac{\partial^2 F}{\partial s^2} |_{s=s_{o}} < 0$, with $s_o$ being the smoothness value that maximizes $F$. The free energy gradients necessary for this scheme are obtained by differentiating Equation \ref{eqn:free_energy}:
\begin{equation} \label{eqn:F_gradients}
\begin{split}
        \frac{\partial F}{\partial s} & = -\frac{1}{2} \tilde{\epsilon}^T \frac{\partial \tilde{\Pi}}{\partial s}\tilde{\epsilon} + \frac{1}{2} \frac{\partial \ln |\tilde{\Pi}|}{\partial s} - s \\
        \frac{\partial^2 F}{\partial s^2} & = -\frac{1}{2} \tilde{\epsilon}^T \frac{\partial^2 \tilde{\Pi}}{\partial s^2}\tilde{\epsilon} + \frac{1}{2} \frac{\partial^2 \ln |\tilde{\Pi}|}{\partial s^2} - 1,
\end{split}
\end{equation}
where the gradients of $\ln |\tilde{\Pi}|$  can be computed as (refer Appendix \ref{app:gradients}) :
\begin{equation} \label{eqn:dlog_Pi}
        \frac{\partial \ln |\tilde{\Pi}|}{\partial s}  = 42(n+m) \frac{1}{s}, \
        \frac{\partial^2 \ln |\tilde{\Pi}|}{\partial s^2}  = -42(n+m) \frac{1}{s^2}. 
\end{equation}
The usage of a gradient ascent scheme on the free energy curve for the estimation of $s$ is motivated by the proof for the existence of a unique maximum for $F$ under practical bounds as follows.
\begin{prop} \label{prop:FE_max}
The free energy $F$ defined by Equation \ref{eqn:free_energy} has a unique maximum with respect to noise smoothness $s$, under the practical range of noise smoothness $(0<s<1)$ and sampling time $(dt<s)$.
\end{prop}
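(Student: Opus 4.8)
The plan is to prove that $F$ is strictly concave in $s$ on the admissible interval, since a strictly concave function on a convex domain has at most one maximizer; existence then follows from continuity on the closed interval, and a boundary check will show that the maximizer is interior. Concavity can be read off the second gradient in Equation \ref{eqn:F_gradients}, which after substituting Equation \ref{eqn:dlog_Pi} becomes $\frac{\partial^2 F}{\partial s^2} = -\frac12\tilde\epsilon^T\frac{\partial^2\tilde\Pi}{\partial s^2}\tilde\epsilon - \frac{21(n+m)}{s^2} - 1$. The last two terms are strictly negative for every $s>0$, and $\tilde\epsilon$ itself does not depend on $s$, so it suffices to establish that the Hessian of the precision is positive semidefinite, $\frac{\partial^2\tilde\Pi}{\partial s^2}\succeq 0$; then the quadratic contribution is $\le 0$ and $\frac{\partial^2 F}{\partial s^2}<0$ throughout.

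First I would exploit the scaling structure of $S$. Reading off Equation \ref{eqn:S_matrix}, every entry is a single monomial $S_{ij}(s)=c_{ij}\,s^{(i-1)+(j-1)}$, so $S(s)=T(s)\,C\,T(s)$ with $T(s)=\mathrm{diag}(1,s,\dots,s^{p})$ and the constant matrix $C=S(1)\succ 0$; this factorization also reproduces Equation \ref{eqn:dlog_Pi}, since $\ln|\tilde\Pi| = 42(n+m)\ln s + \text{const}$. Differentiating twice with the scaling operator $s\,\partial_s$ yields the compact form $s^{2}\,\partial_s^{2}S = T(s)\,G\,T(s)$, where $G=\Lambda\circ C$ is a constant matrix (Hadamard product) with $\Lambda_{ij}=\lambda_{ij}(\lambda_{ij}-1)$ and $\lambda_{ij}=(i-1)+(j-1)$. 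Because $T(s)$ is nonsingular for $s>0$, this is a congruence, so $\partial_s^{2}S\succeq 0$ for every $s>0$ if and only if the single constant matrix $G\succeq 0$; taking Kronecker products with the positive-definite noise precisions $\Pi^z,\Pi^w$ of Equation \ref{eqn:gen_precision} then gives $\partial_s^{2}\tilde\Pi\succeq 0$.

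The main obstacle is the sign of $G$, which does not follow formally from $C\succeq 0$: writing $\lambda_{ij}=a_i+a_j$ with $a_i=i-1$ gives $\Lambda = b\mathbf{1}^{T}+\mathbf{1}\,b^{T}+2aa^{T}$ with $b_i=a_i(a_i-1)$, and this $\Lambda$ is indefinite, so the Schur product theorem does not apply to $\Lambda\circ C$. I would therefore verify $G\succeq 0$ directly, as a finite eigenvalue computation for the explicit $C=S(1)$ at $p=6$; this concrete positivity check is the crux of the argument. Once $G\succeq 0$, we have $\frac{\partial^2 F}{\partial s^2}<0$ on all of $(0,1)$, so $F$ is strictly concave and its maximizer is unique. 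A short boundary estimate then localizes it: using $\partial_s\tilde\Pi = \frac1s(\mathcal N\tilde\Pi + \tilde\Pi\mathcal N)$ for the diagonal index matrix $\mathcal N$, the quadratic part of $\frac{\partial F}{\partial s}$ stays bounded as $s\to 0^{+}$ while $\frac{21(n+m)}{s}\to +\infty$, so $\frac{\partial F}{\partial s}>0$ near $0$ and the unique critical point lies in the interior. The practical bounds $0<s<1$ and $dt<s$ enter only to keep the truncated $S$-model of Equation \ref{eqn:S_matrix} valid and the generalized prediction error $\tilde\epsilon$ well posed, not the concavity inequality itself, which in fact holds for all $s>0$.
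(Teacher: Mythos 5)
Your reduction is elegant and partially correct: the factorization $S(s)=T(s)\,C\,T(s)$ with $T(s)=\mathrm{diag}(1,s,\dots,s^{6})$ does hold (every entry of $S$ is a monomial of degree $(i-1)+(j-1)$, and the factorization reproduces $|S|\propto s^{42}$), and your congruence argument correctly reduces $\partial_s^2 S\succeq 0$ to positive semidefiniteness of the constant matrix $G=\Lambda\circ C$. But the claim you defer to a ``concrete positivity check'' is false, and with it the whole route through global concavity collapses. Since $\lambda_{11}=0$ you have $G_{11}=0$, while $G_{13}=\Lambda_{13}C_{13}=2\cdot\tfrac{35}{8}=\tfrac{35}{4}\neq 0$; a positive semidefinite matrix with a vanishing diagonal entry must have the whole corresponding row and column vanish, or directly
\begin{equation*}
\det\begin{bmatrix} G_{11} & G_{13}\\ G_{13} & G_{33}\end{bmatrix}
=0\cdot 231-\Bigl(\tfrac{35}{4}\Bigr)^{2}<0,
\end{equation*}
so $G$ is indefinite. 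By your own (correct) equivalence, $\partial_s^2 S$ is then indefinite for \emph{every} $s>0$, and Kronecker multiplication by the positive definite $\Pi^z,\Pi^w$ keeps $\tilde{\Pi}_{ss}$ indefinite --- exactly what the paper concedes when it writes $\tilde{\Pi}_{ss}\nsucc 0$. Worse, the failure is not only of your proof but of the statement you set out to prove: since $\tilde{\epsilon}$ does not depend on $s$ and is otherwise unconstrained, pick a direction $u$ with $u^{T}\tilde{\Pi}_{ss}u<0$ at some $s^{*}$ and scale $\tilde{\epsilon}=\alpha u$; the term $-\tfrac12\tilde{\epsilon}^{T}\tilde{\Pi}_{ss}\tilde{\epsilon}$ grows like $\alpha^{2}$ while $-\tfrac{21(n+m)}{s^{2}}-1$ is fixed, so $F_{ss}(s^{*})>0$ for $\alpha$ large. $F$ is simply not concave in $s$ uniformly over $\tilde{\epsilon}$, so no eigenvalue computation can rescue this strategy.

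This is precisely why the paper argues only at stationary points rather than globally: stationarity supplies extra sign information, namely $\tfrac12\,\tilde{\epsilon}^{T}\tilde{\Pi}_{s}\tilde{\epsilon}\big|_{s_o}=\tfrac{21(n+m)}{s_o}-s_o>0$ for $0<s_o<1$, and the paper then invokes the implication $\tilde{\epsilon}^{T}\tilde{\Pi}_{s}\tilde{\epsilon}>0\Rightarrow\tilde{\epsilon}^{T}\tilde{\Pi}_{ss}\tilde{\epsilon}>0$ --- supported not analytically but by the random-sampling study of Appendix B --- to conclude $F_{ss}|_{s=s_o}<0$. Every critical point is thus a strict local maximum, hence there can be at most one. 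In other words, the sign constraint at the critical point, which your global approach throws away, is the only thing that makes the conclusion tenable. Your observation that the Schur product theorem is inapplicable because $\Lambda$ is indefinite was the right warning sign; the productive version of your idea would be to study positivity of the quadratic form $\tilde{\epsilon}^{T}\tilde{\Pi}_{ss}\tilde{\epsilon}$ not on the whole space but on the cone $\{\tilde{\epsilon}:\tilde{\epsilon}^{T}\tilde{\Pi}_{s}\tilde{\epsilon}>0\}$, which is exactly the analytical gap that the paper's numerics currently stand in for.
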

\begin{proof}
Consider all the smoothness values of $s$ with zero free energy gradients ($\frac{\partial F}{\partial s} |_{s=s_{o}} = 0$). Substituting Equation \ref{eqn:dlog_Pi} in \ref{eqn:F_gradients} and using $\frac{\partial F}{\partial s} |_{s=s_{o}} = 0$ yields the condition satisfied by all maximum and minimum points:
\begin{equation} \label{eqn:s_opt}
    \frac{1}{2} (\tilde{\epsilon}^T \tilde{\Pi}_s \tilde{\epsilon})|_{s=s_o} = \frac{21(n+m)}{s_o} - s_o
\end{equation}
where we use the shorthand $\tilde{\Pi}_s = \frac{\partial \tilde{\Pi}}{\partial s}$. Since $0<s<1$, we have from Equation \ref{eqn:s_opt} that:
\begin{equation} \label{eqn:grad_condition}
        (\tilde{\epsilon}^T \tilde{\Pi}_s \tilde{\epsilon})|_{s=s_o} > 0 .
\end{equation}
The proof for the existence of a unique maximum is complete if we prove that $\frac{\partial^2 F}{\partial s^2}|_{s=s_o}<0$, for all $s_o$ satisfying Equation \ref{eqn:grad_condition}. The curvature of $F$ at $s=s_o$ is calculated from Equation \ref{eqn:F_gradients} using Equation \ref{eqn:dlog_Pi} as:
\begin{equation} \label{eqn:Fss_so}
        F_{ss}|_{s=s_o} = -\frac{1}{2} \Big( (\tilde{\epsilon}^T  \tilde{\Pi}_{ss} \tilde{\epsilon})|_{s=s_o}  +  42(n+m) \frac{1}{s_o^2} + 1 \Big)
\end{equation}
Since $\tilde{\Pi} \succ 0$, from definition $\tilde{\epsilon}^T \tilde{\Pi} \tilde{\epsilon} > 0$, and since $(\tilde{\epsilon}^T \tilde{\Pi}_s \tilde{\epsilon})|_{s=s_o} > 0$, we can conclude that $ (\tilde{\epsilon}^T  \tilde{\Pi}_{ss} \tilde{\epsilon})|_{s=s_o} > 0$, even though $\tilde{\Pi}_s \nsucc 0$ and $\tilde{\Pi}_{ss} \nsucc 0$ (refer Appendix \ref{app:numerical} for numerical analysis). From Equation \ref{eqn:Fss_so}, $( \tilde{\epsilon}^T  \tilde{\Pi}_{ss} \tilde{\epsilon})|_{s=s_o} > 0 \implies F_{ss}|_{s=s_o} < 0$, completing the proof for the existence of a unique maximum of free energy at $s=s_o$. 
\end{proof}

\section{OBSERVER DESIGN} \label{sec:observer_design}
This section aims to introduce a novel observer design (DEMs) for the joint state and noise smoothness estimation of a linear system with colored noise. We formulate the noise smoothness estimator from the previous section (gradient ascend on $F$), using the Newton-Gauss update scheme:
\begin{equation} \label{eqn:s_update_rule}
\begin{split}
        s(t+ dt) &= s(t) + ds, \\
        ds &= (e^{F_{ss}|_t dt}-1)(F_{ss}|_t)^{-1}F_s |_t,
\end{split}
\end{equation}
where $s(t)$ is the smoothness at time $t$, and $ds$ is the smoothness increment for a time increment of $dt$. We combine this observer design with the standard DEM observer design for state estimation \cite{meera2020free}, where the update equation in the continuous time is given by:
\begin{equation} \label{eqn:DEM_state_update}
    \dot{\tilde{x}} =     A_1 \tilde{x} +  
    B_1 \begin{bmatrix}
    \tilde{\textbf{y}} \\ \tilde{\textbf{v}}
    \end{bmatrix}  
\end{equation}
where $A_1 = [D^x-k^x\tilde{C}^T\tilde{\Pi}^z\tilde{C}  -k^x(D^x-\tilde{A})^T \tilde{\Pi}^w(D^x-\tilde{A})]$, $B_1 = k^x\begin{bmatrix}
    \tilde{C}^T\tilde{\Pi}^z & (D^x-\tilde{A})^T\tilde{\Pi}^w\tilde{B}
    \end{bmatrix}$, and $k^x$ is the learning rate which is set to 1 throughout this paper. Since Equation \ref{eqn:DEM_state_update} is a linear differential equation, an exact algebraic discretization can be performed for the observer as:
\begin{equation} \label{eqn:state_update_rule}
    \tilde{x}(t+dt) = e^{A_1 dt} \tilde{x}(t) + A_1^{-1}(e^{A_1 dt} - I)B_1 \begin{bmatrix}
    \tilde{\textbf{y}}(t) \\ \tilde{\textbf{v}}(t)
    \end{bmatrix}  
\end{equation}
Equations \ref{eqn:s_update_rule} and \ref{eqn:state_update_rule} together complete our observer design. Note that $A_1$ and $B_1$ are nonlinear functions of $s$ because of the presence of $\tilde{\Pi}^w$ and $\tilde{\Pi}^z$ in it. Therefore, the update equations of state and noise smoothness observers are coupled. Since this heavily complicates the stability proof of the joint estimator, we leave it for future research.

\section{WORKING EXAMPLE}
This section aims to provide a working example in simulation to show the capabilities of our observer design. We use simulation data at different $s$ levels to show that DEMs can accurately estimate $\tilde{x}$ and $s$.

\begin{figure}[h]
    \centering
    \captionsetup{justification=justified}
    \includegraphics[scale = 0.25]{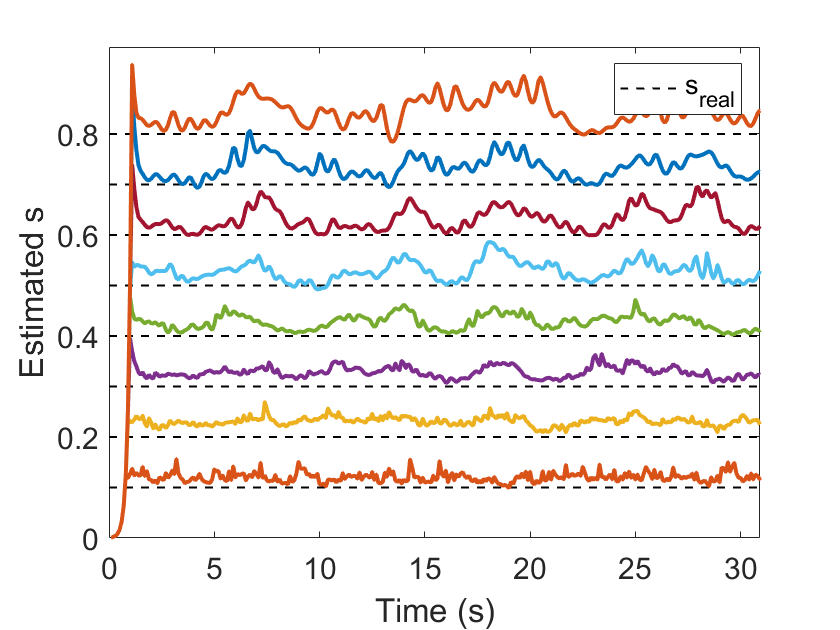}
    \caption{The maximization of $F$ successfully estimates $s$ for 8 simulations with different $s_{real}$. The colored solid lines represent the online estimation of $s$, whereas the dotted lines represent $s_{real}$. The estimation starts from  $\eta^s=0.001$ at time $t=0$ for all simulations and converges close to $s_{real}$ with a bias, within a few samples.  }
    \label{fig:s_est_time}
\end{figure}

\subsection{Simulation settings} \label{sec:sim_setup}
A random system with $A =  \begin{bmatrix}
    0.0484   & 0.7535 \\  -0.7617 &  -0.2187
    \end{bmatrix}$, $B = \begin{bmatrix}
    0.3604 \\     0.0776
    \end{bmatrix}$, and $C =  \begin{bmatrix}
    0.2265 & -0.4786\\ 0.4066 &  -0.2641\\ 0.3871 & 0.3817 \\ -0.1630 & -0.9290
    \end{bmatrix}$ was used to generate the synthetic data for a total time of $T = 32s$ with increments $dt=0.1s$, and a Gaussian bump input $v = e^{-0.25(t-12)^2}$. The colored noise was generated using $\Pi^w = e^6 I_{2}$ and $\Pi^z = e^6 I_{4}$. This simulation setting will be used throughout the paper, unless mentioned otherwise.  We generate eight such time series data using different levels of noise smoothness $s$, ranging from 0.1 to 0.8 and use it for the analysis in this section. 

\begin{figure}[h]
    \centering
    \captionsetup{justification=justified}
    \includegraphics[scale = 0.27]{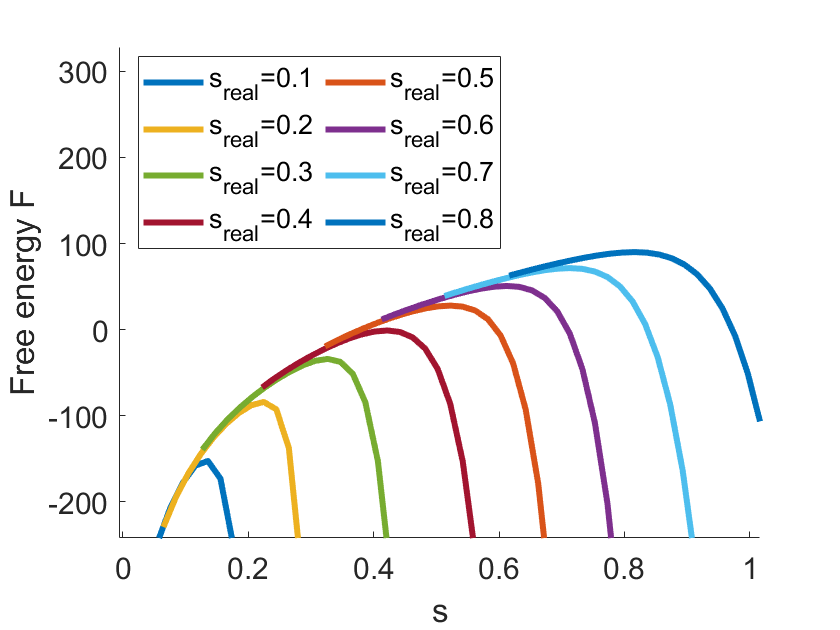}
    \caption{Free energy vs $s$ plot for 8 different simulations (with 8 different real $s$ values) from Figure \ref{fig:s_est_time} at $t=5s$. All 8 free energy curves show a clear maximum around the real $s$ values. This shows the effectiveness of our noise smoothness observer.   }
    \label{fig:F_peaks}
\end{figure}

\subsection{Test example}
Figure \ref{fig:s_est_time} shows the results of our noise smoothness estimator for all eight simulations. All simulations start with the prior $\eta^s= 0.001$ and quickly stabilises around the correct smoothness value ($s_{real}$ in dashed black), showing the success of our estimator for a range of noise smoothness values. Figure \ref{fig:F_peaks} shows the free energy vs $s$ curve at $t=5s$ for all eight simulations. The clear peaks of the free energy curve around the correct noise smoothness value ($s_{real}$) shows that free energy could be used as the objective function for noise estimation for the operational ranges of $s$. The importance of estimating the correct $s$ is shown in Figure \ref{fig:SSE_s_KF}, where the minimum state estimation error is achieved when $s_{real}$ is known. Therefore, Figure \ref{fig:SSE_s_KF}, \ref{fig:s_est_time} and \ref{fig:F_peaks}, together demonstrates the validity of our observer design in simulation. In the next section, we will benchmark our observer against the state-of-the-art observers.

\begin{figure}[h]
    \centering
    \captionsetup{justification=justified}
    \includegraphics[scale = 0.3]{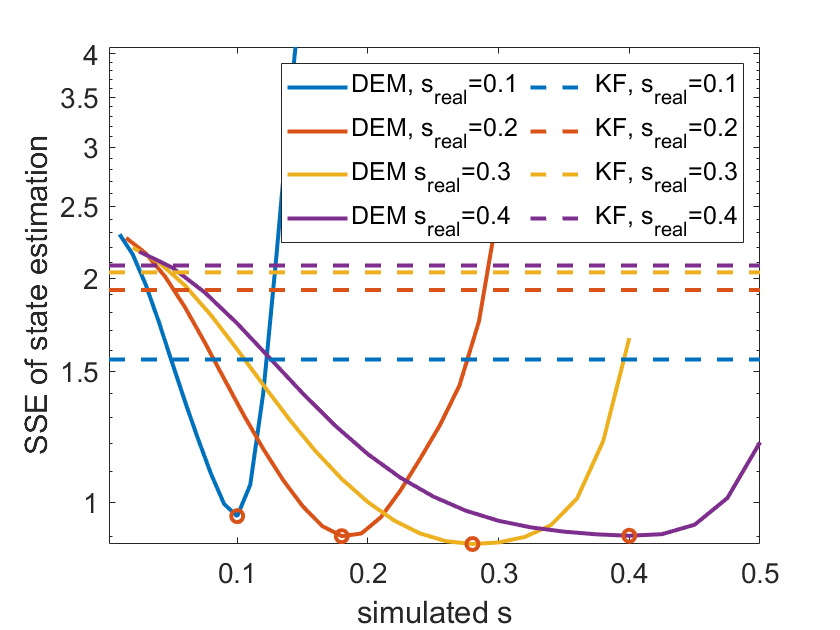}
    \caption{The sum of squared error (SSE) in state estimation of DEM deteriorates when the noise smoothness $s$ used is different from the real noise smoothness $s_{real}$. The solid and dotted lines denote the SSE of DEM and KF for different simulated $s$. The SSE for DEM takes a minimum when $s \approx s_{real}$. For lower $s$ (0.1 for example), KF outperforms DEM if $s$ is not close to $s_{real}$, emphasizing the importance of an online noise smoothness observer.}
    \label{fig:SSE_s_KF}
\end{figure}
 
\section{BENCHMARKING} \label{sec:benchmarking}
This section aims to benchmark the performance of our smoothness estimator for a state estimation problem. Through rigorous simulations, we show that our observer provides competitive performance during high colored noise cases.


\subsection{Embedding order of states}
In this section, we use rigorous simulations to show that our observer design can enable state estimation under a wide range of noises -- at different embedding orders and smoothness levels. We manipulate on the dimension and component values of the $S$ matrix in Equation \ref{eqn:S_matrix} through different $p$ and $s$ values, under the same simulation setup described in Section \ref{sec:sim_setup} with $dt=0.05s$. The size of $S$ matrix increases with increasing $p$, whereas the components inside it increases with increasing $s$. Figure \ref{fig:SSE_p} shows the results of state estimation using 150 experiments (5 randomly generated noises each for five $s$ values and six $p$ values). The estimation error decreases with increasing $p$ for different noise smoothness values, highlighting the importance of using higher order generalized coordination during estimation. This shows the applicability of our observer for a wide range of noise smoothness, embedding orders and noises.

\begin{figure}[h]
    \centering
    \captionsetup{justification=justified}
    \includegraphics[scale = .27]{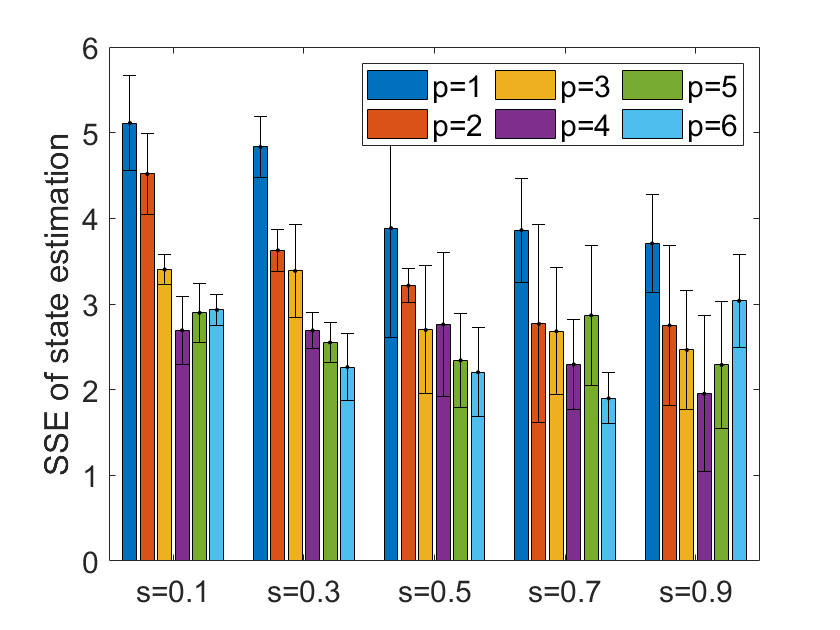}
    \caption{The error in state estimation decreases as the embedding order of states $p$ increases, for a range of noise smoothness $s$. This shows that our smoothness estimation aids an accurate state estimation till an embedding order of $p=4$ for a wide range of $s$.   }
    \label{fig:SSE_p}
\end{figure}

\subsection{Benchmark state observer}
In this section, we benchmark our observer against other state-of-the-art observers like KF, SA and SMIKF, to show its competitiveness. 50 time series data (10 each for 5 smoothness values with $dt=0.05s$) were generated using the simulation setup in Section \ref{sec:sim_setup} and the SSE in state estimation was computed for KF, SA, SMIFK and DEMs. The SMIKF and SA implementation accommodated an AR model of order 1 and 6 respectively for the noise modelling, whereas the DEM implementation used an embedding order of $p=6$ for states and $d=2$ for inputs. Figure \ref{fig:benchmark_DEM_SSE} shows the results, clearly indicating the superior performance of DEMs with minimum error in state estimation for higher $s$. DEMs outperforms other observers for a wide range of $s$ values. However, for low noise color ($s=0.1$), SA and SMIKF outperforms DEMs. In all cases, DEMs outperforms KF in the presence of colored noise.



\begin{figure}[h]
    \centering
    \captionsetup{justification=justified}
    \includegraphics[scale = 0.26]{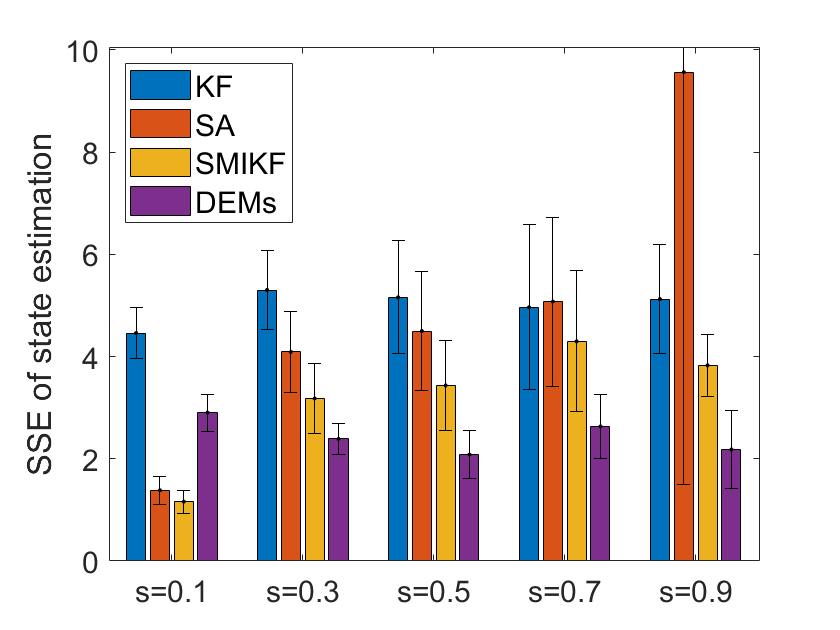}
    \caption{DEMs outperform KF, SA and SMIKF with minimal estimation error during state estimation under high colored noise ($s>0.1$). When noises are near white ($s$ close to $dt$), DEMs outperforms KF, but SA and SMIKF performs better. Solutions of SA is unstable for higher $s$. }
    \label{fig:benchmark_DEM_SSE}
\end{figure} 

\section{PROOF OF CONCEPT - QUADROTOR FLIGHT} \label{sec:quadrotor}
This section aims to provide a proof of concept for our observer design by employing it for the state estimation of a real quadrotor flying under wind conditions. We use the experimental design from \cite{bos2021free} to obtain the quadrotor flight data. The experiment consist of a quadrotor hovering at a fixed location, under the strong influence of wind generated by a blower. The linearized quadrotor model relating the input motor signals to the output roll angle ($\phi$) of the quadcopter, without accounting for the wind dynamics is given by \cite{bos2021free}:

\begin{equation}
\label{eqn:ss_small}
\begin{aligned}
    \begin{bmatrix}
    \dot \phi\\
    \ddot \phi
    \end{bmatrix} &= \begin{bmatrix}
    0&1\\0&0
    \end{bmatrix}\begin{bmatrix}
    \phi \\ \dot \phi
    \end{bmatrix} + \begin{bmatrix}
        0&0&0&0\\
        \frac{c_{B\phi}}{I_{xx}}&-\frac{c_{B\phi}}{I_{xx}}&-\frac{c_{B\phi}}{I_{xx}}&\frac{c_{B\phi}}{I_{xx}}
    \end{bmatrix} \Bigg[ \begin{smallmatrix}
    pwm_1\\pwm_2\\pwm_3\\pwm_4
    \end{smallmatrix} \Bigg], \\
    y &= \begin{bmatrix}
    1&0
    \end{bmatrix}\begin{bmatrix}
    \phi\\
    \dot \phi
    \end{bmatrix},
\end{aligned}
\end{equation}
where $pwm_i$ is the Pulse Width Modulation signal provided to the $i^{\text{th}}$ motor by the controller for stable hovering, $I_{xx} = 3.4 \cdot 10^{-3} kgm^2$ is the quadcopter's moment of inertia around the $x$-axis, and  $c_{B\phi} = 1.274 \cdot 10^{-3}Nm$ is the thrust coefficient that models the relation between the PWM values and the thrust generated by the quadcopter rotors. $\phi$ was recorded using the Optitrack system, and was used for the state estimation for a time sequence of $T = 15s$ with $dt = 0.0083s$. The influence of wind dynamics on the quadrotor states ($\phi$ and $\dot{\phi}$) is unmodelled in Equation \ref{eqn:ss_small}. Therefore, the wind dynamics induces strong colored noise ($w$) in the data \cite{meera2021brain}. The higher process noise ($\Pi^w = e^4$), and a lower measurement noise ($\Pi^z = e^{10}$) were used to represent high unmodelled wind noise and low Optitrack noise respectively. The embedding order of $p = 2$ and $d= 2$ were used to capture the noise color.

\begin{figure}[h]
    \centering
    \captionsetup{justification=justified}
    \includegraphics[scale = 0.27]{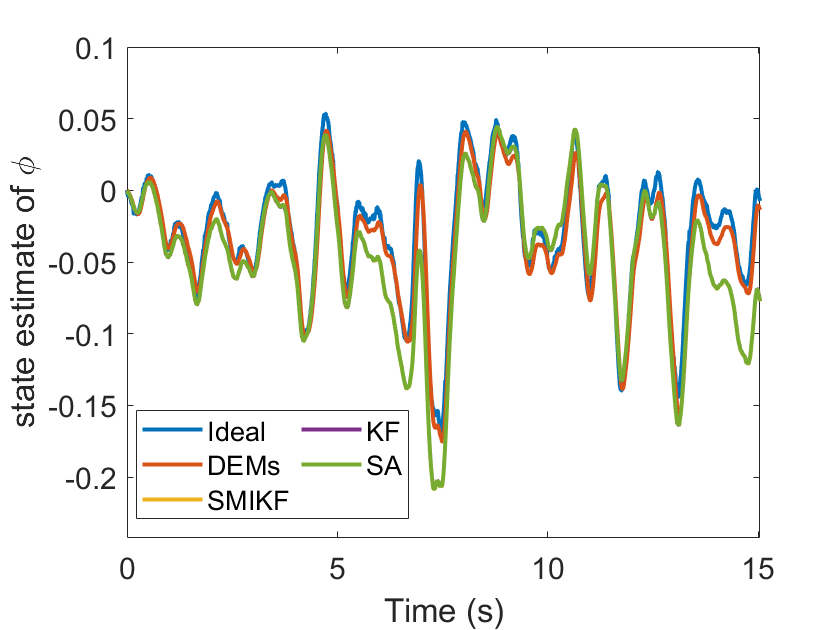}
    \caption{DEMs outperforms other benchmarks in state estimation on the quadrotor flight data where it hovers under the influence of wind, introducing colored noise into the system. DEMs (in red) is closer to the ground truth Optitrack measurement (ideal in blue), when compared to other benchmarks. KF, SA and SMIKF shows coinciding estimation plots.}
    \label{fig:drone_state_est_bench}
\end{figure}

\begin{figure}[h]
    \centering
    \captionsetup{justification=justified}
    \includegraphics[scale = 0.23]{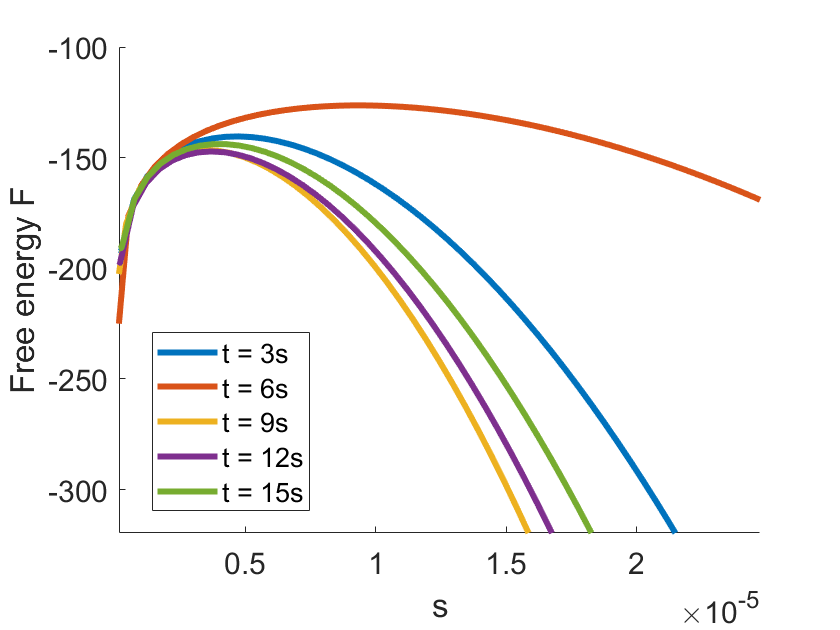}
    \caption{The free energy vs $s$ plot for different time instances during the quadrotor flight (data from Figure \ref{fig:drone_state_est_bench}). The curves show a clear maximum, similar to the simulation results from Figure \ref{fig:F_peaks}. This provides an experimental validation for using the gradient ascend over free energy for smoothness estimation.  }
    \label{fig:drone_s_est}
\end{figure} 

Figure \ref{fig:drone_state_est_bench} shows the superior state estimation capabilities of DEMs. DEMs (in red) is closer to the ground truth (in blue) when compared to other benchmarks. KF, SMIKF and SA have coinciding state estimation curves. Figure \ref{fig:drone_s_est} shows the free energy vs $s$ curve at different time instances $t$ of the quadrotor flight, showing a clear maximum, similar to the simulation results in Figure \ref{fig:F_peaks}, validating the practical application of our estimator.

\section{CONCLUSION}
A novel observer (DEMs) for the joint state and noise smoothness estimation of linear systems with colored noise was introduced. Through rigorous simulations, DEMs was shown to outperform the benchmarks like KF, SMIKF and SA in state estimation under colored noise with minimum estimation error. The observer was face validated by applying it on a practical robotics application - the state estimation of a quadrotor hovering in unmodelled wind conditions, to show that DEMs is a competitive observer. The main limitation of this work is the absence of a stability proof for the joint state and noise smoothness observer, which can be the focus of future research. The estimator could be extended to solve the general active inference problem for the estimation and control of nonlinear systems with colored noise.

\section*{Acknowledgment}
We would like to thank Peyman Mohajerin Esfahani for his valuable insights on joint observer design.

\bibliographystyle{IEEEtran}
\footnotesize
\bibliography{IEEEabrv,reference}

\appendix
\subsection{Gradients of \texorpdfstring{$\ln |\tilde{\Pi}|$}{Lg}} \label{app:gradients}
The log determinant of generalized precision can be calculated using Equations \ref{eqn:S_matrix} and \ref{eqn:gen_precision}  as:
\begin{equation}
\begin{split}
    \ln |\tilde{\Pi}| = & \ln |S \otimes \Pi^z|  + \ln |S \otimes \Pi^w|  \\
     = &  \ln (|S|^{m} |\Pi^z|^{(p+1)})  + \ln (|S|^{n} |\Pi^w|^{(p+1)}) \\
     =  &  (p+1) ( \ln |\Pi^z| +   \ln |\Pi^w| ) + (n+m) \ln |S| .
\end{split}
\end{equation}
The derivative of $\ln |\tilde{\Pi}|$ with respect to $s$ becomes:
\begin{equation} \label{eqn:dlogPi_ds}
        \frac{\partial \ln |\tilde{\Pi}|}{\partial s}  = (n+m) \frac{\partial \ln |S|}{\partial s}, 
        \frac{\partial^2 \ln |\tilde{\Pi}|}{\partial s^2}  = (n+m) \frac{\partial^2 \ln |S|}{\partial s^2}.
\end{equation}
From Equation \ref{eqn:S_matrix}, $|S| = \frac{512}{6075}s^{42}$, resulting in $\frac{\partial \ln |S|}{\partial s} = \frac{42}{s}$ and $\frac{\partial^2 \ln |S|}{\partial s^2} = -\frac{42}{s^2}$. This simplifies Equation \ref{eqn:dlogPi_ds} to:
\begin{equation} 
        \frac{\partial \ln |\tilde{\Pi}|}{\partial s}  = 42(n+m) \frac{1}{s}, \
        \frac{\partial^2 \ln |\tilde{\Pi}|}{\partial s^2}  = -42(n+m) \frac{1}{s^2}. 
\end{equation}

\subsection{Numerical analysis on the nature of \texorpdfstring{$\tilde{\epsilon}^T \tilde{\Pi} \tilde{\epsilon}$}{Lg}} \label{app:numerical}
We recorded the first two gradients of the polynomial $\tilde{\epsilon}^T \tilde{\Pi} \tilde{\epsilon}$ with respect to $s$ for 20,000 combinations of randomly sampled $\tilde{\epsilon}$ and $s$ such that $|\tilde{\epsilon}| <1 $ and $s \in (0,1]$. From the results shown in Figure \ref{fig:monotonic}, the data points predominantly lie on the first quadrant, suggesting that the function has positive gradients, which is a sign of monotonically increasing function. The absence of any points on the fourth quadrant motivates the conclusion: if $\tilde{\epsilon}^T \tilde{\Pi}_s \tilde{\epsilon} > 0$ then $\tilde{\epsilon}^T \tilde{\Pi}_{ss} \tilde{\epsilon} > 0$. The results remain the same for different norm lengths of $\tilde{\epsilon}$.

\begin{figure}[h]
    \centering
    \captionsetup{justification=justified}
    \includegraphics[scale = 1]{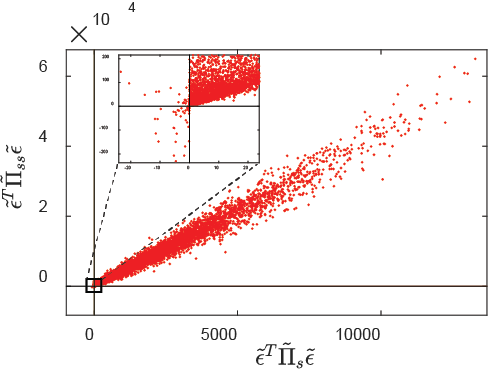}
    \caption{The plot demonstrating that the function $\tilde{\epsilon}^T \tilde{\Pi} \tilde{\epsilon}$ is mostly monotonically increasing with respect to $s$ in the domain (0,1] for $|\tilde{\epsilon}|<1$. Moreover, when $\tilde{\epsilon}^T \tilde{\Pi}_s \tilde{\epsilon} > 0$, $\tilde{\epsilon}^T \tilde{\Pi}_{ss} \tilde{\epsilon}< 0$, since there are no data points on the fourth quadrant as shown in the zoomed plot.  }
    \label{fig:monotonic}
\end{figure}

\addtolength{\textheight}{-3cm} 

\end{document}